\definecolor{gray1}{rgb}{0.8,0.8,0.8}
\definecolor{gray2}{rgb}{0.95,0.95,0.95}
\definecolor{light-gray}{gray}{0.95}
\definecolor{light-red}{rgb}{0.96, 0.76, 0.76}
\definecolor{light-blue}{rgb}{0.63, 0.79, 0.95}
\newcommand{\overbar}[1]{\mkern 1.5mu\overline{\mkern-1.5mu#1\mkern-1.5mu}\mkern 1.5mu}
\newtheorem{Th}{Theorem}[section]
\newtheorem{Rm}[Th]{Remark}
\newtheorem{Algo}[Th]{Algorithm}
\newtheorem{Test}[Th]{Test}
\newcommand{\RE}{\,{\rm Re}}
\newcommand{\CC}{{\mathbb C}}
\newcommand{\NN}{{\mathbb N}}
\newcommand{\Prb}{{\mathbb P}}
\newcommand{\RR}{{\mathbb R}}
\newcommand{\SSS}{{\mathbb S}}
\newcommand{\cN}{{\mathcal{N}}}
\DeclareMathOperator*{\argmin}{argmin}
\DeclareMathOperator*{\Arg}{Arg}
\newcommand{\iid}{\operatorname{\stackrel{i.i.d.}{\sim}}}
\newcommand\eqnref[1]{Equation~(\ref{#1})}
\newcommand\citeegp[1]{(e.g. \cite{#1})}
\begin{document}

\title{Detecting Anisotropy in Fingerprint Growth}

\author{Karla Markert\footnote{Georg-August-Universit\"at G\"ottingen, Institut f\"ur Mathematische Stochastik}, Karolin Krehl$^*$, Carsten Gottschlich$^*$, \\and Stephan Huckemann$^*$\footnote{huckeman@math.uni-goettingen.de}}

\date{}
\maketitle

\begin{abstract}
    From infancy to adulthood, human growth is anisotropic, much more along the proximal-distal axis (height)  than along the medial-lateral axis  (width), particularly at extremities. 
    Detecting and modeling the rate of anisotropy in fingerprint growth, and possibly other growth patterns as well, facilitates the use of children's fingerprints for long-term biometric identification. Using standard fingerprint scanners, anisotropic growth is highly overshadowed by the varying distortions created by each imprint, and it seems that this difficulty has  hampered to date the development of suitable methods, detecting anisotropy, let alone, designing models. 

    We provide a tool chain to statistically detect, with a given confidence, anisotropic growth in fingerprints and its preferred axis,
    where we only require a standard fingerprint scanner and a minutiae matcher.
    We build on a perturbation model, a new Procrustes-type algorithm, use and develop several parametric and non-parametric tests for different hypotheses, in particular for neighborhood hypotheses to detect the axis of anisotropy, where the latter tests are tunable to measurement accuracy. 
    Taking into account realistic distortions caused by pressing fingers on scanners, our simulations based on real data  indicate that, for example, already in rather small samples 
    (56 matches)
    we can  significantly detect 
    proximal-distal growth if it exceeds medial-lateral growth by only around $5\,\%$. 
    
    Our method is well applicable to future datasets of children fingerprint time series and we provide an implementation of our algorithms and tests with matched minutiae pattern data.
\end{abstract}

\section*{Keywords} Circular statistics; fingerprint minutiae; von Mises distribution; 
extrinsic mean; Procrustes analysis; neighborhood hypothesis, consistency bias

\section{Introduction}

When identification of humans by identity documents is not reliably possible, it is common practice to fall back on identification by biometric traits, as for instance in the monumental Aadhaar program initiated by the Unique Identification  
Authority of India\footnote{https://uidai.gov.in/} (UIDAI) in 2009. Among these, automated fingerprint identification systems (AFIS) have proven to be highly successful, because fingerprints are easily accessible and AFISs only requires minimal infrastructure \citeegp{Sonla2007using}. 

Indeed, human recognition by fingerprints has evolved into a mature technology, for an overview see \cite{HandbookFingerprintRecognition2009}. Identifying suspects in forensic investigations by fingermarks left at crime scenes has been the main field of application of fingerprint recognition over the past century. An important governmental application of fingerprint recognition is border control. Nowadays also many commercial applications rely on fingerprint recognition: Hundreds of millions of people use fingerprints to unlock their smartphone or tablet PC, and increasingly, fingerprints are also used for authorizing financial transactions in mobile payment applications. 
Despite all the progress in the development of algorithms and hardware, a number of challenges remain to be addressed and solved. This includes the development of methods for processing low-quality and very low-quality images \citeegp{Schumacher2013}), especially latent fingerprints \citeegp{SankaranVatsaSingh2014}, and for tasks like image segmentation \citeegp{ThaiHuckemannGottschlich2016} or image enhancement \citeegp{Gottschlich2012,GottschlichSchoenlieb2012}. Further topics which require more research are fingerprint liveness detection \citeegp{Gottschlich2016} as a countermeasure to presentation attacks with spoof fingerprints, and fingerprint recognition for toddlers and children \citeegp{Schumacher2013}.

While most AFISs are designed for adult fingerprints, specific challenges arise when identifying smaller children, even newborns, in vaccination programs, say. Such challenges (e.g. much smaller prints, usually of much lower quality with much higher ridge frequencies) have been recently addressed and considerable progress has been made, e.g. \cite{KotzerkeArakalaDavisHoradamMcVernon2014, jain2015biometrics,Jain2017fingerprint}. These challenges have also led to the development of authentication methods for newborns, say, based on derived features such as key points from pore patterns \citeegp{LemesSecundoBellonSilva2014} or of entire palm- or footprints \citeegp{LemesBellonSilvaJain2011,jia2012newborn}).

For adults, fingerprints can also be used for very long-term identification, because, as \cite{Galton1892} observed, fingerprints remain largely unchanged throughout adulthood, unless seriously damaged. Obviously this is wrong for fingerprints of children, because between infancy and adulthood, body-size grows by a considerable factor, where, in particular for the limbs, distal-proximal (length) growth exceeds by far the lateral-medial (width) growth. While there is consensus that the topological structure of fingerprint ridge lines, in particular singular points and minutiae, is fixated long before birth  around the 12th gestational week \citeegp{KuckenNewell2007}, 
it is unclear to date, how growth, either over the entire growing period or parts thereof, distorts this structure. In fact, as fingerprints are on an extremal locus on an extremity, it is natural to expect a considerable rate of anisotropic growth. As a first study to this end,  \cite{HotzGottschlichLorenzBernhardtHantschelMunk2011} derived a framework for adolescent fingerprints, modeling isotropic growth correlated with body size, thereby greatly improving forensic matching rates for the German Federal Criminal Bureau (BKA), cf. \cite{GottschlichHotzLorenzBernhardtHantschelMunk2011}. While it is desirable to draw conclusions regarding anisotropic growth ``in
order to give a clear message to developers of fingerprint recognition systems'', as demanded by the \cite{JRC2013}, this  study concluded that possible anisotropic effects, at least for adolescents, seem to be overshadowed by the variability of local distortions caused by physically placing  fingers onto a scanner and by bad quality of imprints leading to minutiae mismatches. Figure \ref{fig:distortion} shows such typical distortions including those due to imprecise minutiae recognition, even on imprints of fairly good quality.

	\begin{figure}[h!]
	\centering
	\includegraphics[width=0.27\textwidth, clip=true, trim= 0cm 0cm 0cm 0cm]{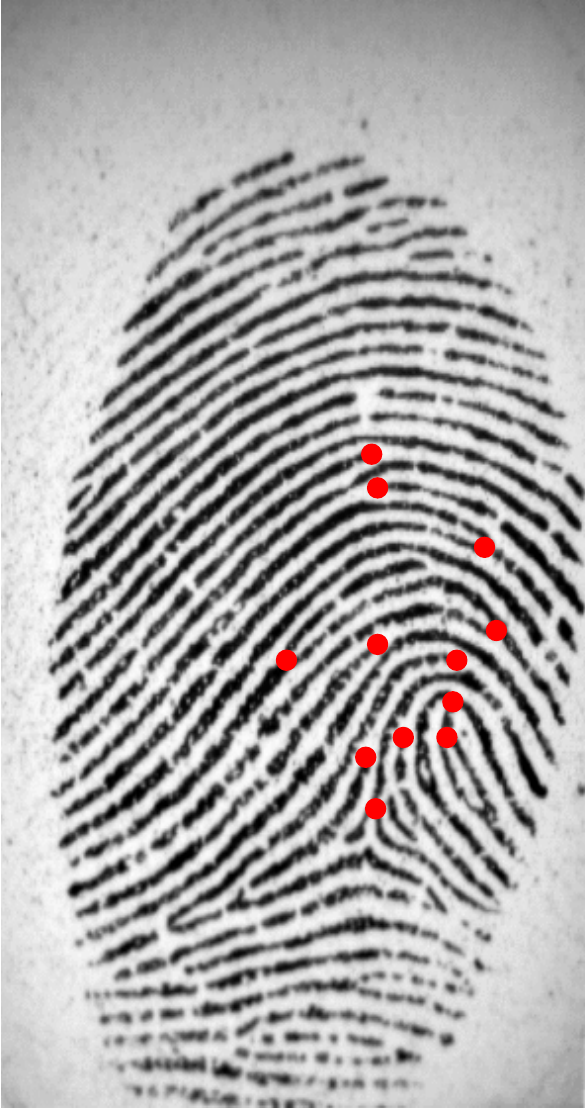} 
	\includegraphics[width=0.36\textwidth]{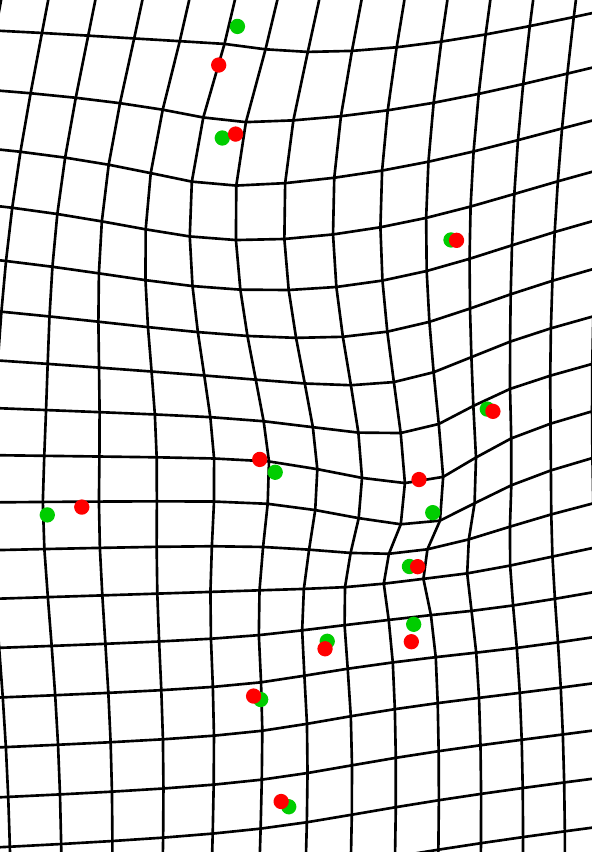}
	\includegraphics[width=0.27\textwidth, clip=true, trim= 0cm 0cm 0cm 0cm]{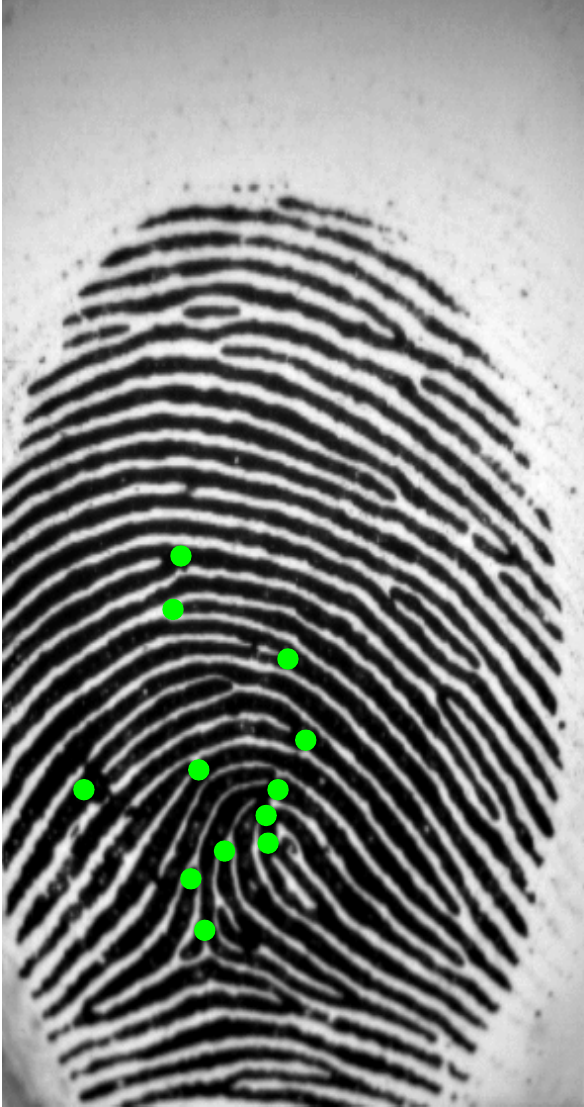} 
	\caption{\it Imprint 2 (left) and imprint 6 (right) of the second finger from FVC2002 DB2 with marked minutiae and closeup of overlaid minutiae (center) along with relative distortion (which may also be caused by imprecise minutiae recognition, e.g. the top minutia and the third minutia from the right) estimated by a thin-plate spline model using \cite{Dshapes}.\label{fig:distortion}}
	\end{figure}

As a first step towards the design of general fingerprint growth models, in this paper, we 
develop a series of statistical tests for anisotropic growth of fingerprint minutiae patterns, cf. Table \ref{table:TestOverview}. As shown in Figure \ref{PictureTest} at the core of our work is a Procrustes-type \citeegp{DM98} algorithm, to estimate growth and nuisance values for the minutiae point patterns. Then, various statistical tests are introduced to analyze the distributions of the parameters governing anisotropic growth. Our first three tests detect anisotropy and we discuss the power of the first two. More precisely, by simulation we provide for an estimate of minimal anisotropy that can be detected, in the presence of natural distortions due to finger placement on scanners. For the number of $56$ matches considered, it turns out that already $2.6\,\%$ (Test \ref{test:Rayleigh}) or $6.1\,\%$ (Test \ref{testfortau}), respectively, of anisotropy can be statistically detected. If anisotropy has been detected by these first tests, our next tests can be applied to detect the direction of growth, and we give a parametric and a nonparametric version. For each we show that they (asymptotically) keep the level and asymptotically their false rejection rate goes to zero while their power tends to one. We validate these tests by using real distortions and simulated growth and illustrate that we can detect the direction of growth within the accuracy of aligning fingerprints. 
All of the simulations are based on hand marked data from the FVC 2002 DB2 \citep{FVC2002}.

More precisely, for the 8 imprints of fingers 2 -- 9, we have have used VeriFinger\footnote{www.neurotechnology.com/} to extract and match minutiae, we have manually corrected obvious mismatches and we have stored minutiae loci in the data files of the supplemental package. For every one of the seven imprints, 2 -- 8, we have stored a copy of the first imprint featuring corresponding minutiae.

\begin{center}
	\begin{figure}[h!]

	\footnotesize{
	\centering
		\begin{tikzpicture}[every text node part/.style={align=center}]
		\node (Real) [ellipse, draw, fill=light-gray, scale=1.2] {Unknown true values \\ ~\\\fbox{
		\begin{tabular}{r|l}
		$ \beta$ & fingerprint rotation angle\\
		$ \lambda$ & rate of isotropic growth\\
		$ \gamma $ & angle of anisotropic growth\\
		$ \tau$ & rate of anisotropic growth
		\end{tabular}
		}
		}; 
		\node (Calc) [below = 4 cm] [ellipse, draw, fill=light-gray, scale=1.2] {Estimated values \\ $\hat{\beta},  \hat{\lambda}, \hat{\gamma}, \hat{\tau}$};
		\node (Test) [below = 7cm] [ellipse, draw, fill=light-gray, scale=1.2] {Distributions of \\ $\hat{\gamma}$ and $\hat{\tau}$};
		\node[draw,inner sep=2mm,label=right:{\normalsize Algorithm 2.1},fit=(Real) (Calc), , fill=light-red, fill opacity=0.15] {};
		\node[draw,inner sep=2mm,label=right:{\normalsize Tests 3.1 -- 3.5},fit=(Calc) (Test), scale=1.1, fill=light-blue, fill opacity=0.1] {};
		\draw[->, thick] (Real) to (Calc);
		\draw[->, thick] (Calc) to (Test);
		\end{tikzpicture}}
		\caption{\it Nuisance parameter $\beta$ and values $\lambda,\gamma$ and $\tau$ governing growth of fingerprint minutiae point patterns are estimated by Algorithm \ref{mainalgo}. The five tests listed in Table \ref{table:TestOverview} are based on the distributions of suitable estimates.}
		\label{PictureTest}
	\end{figure}
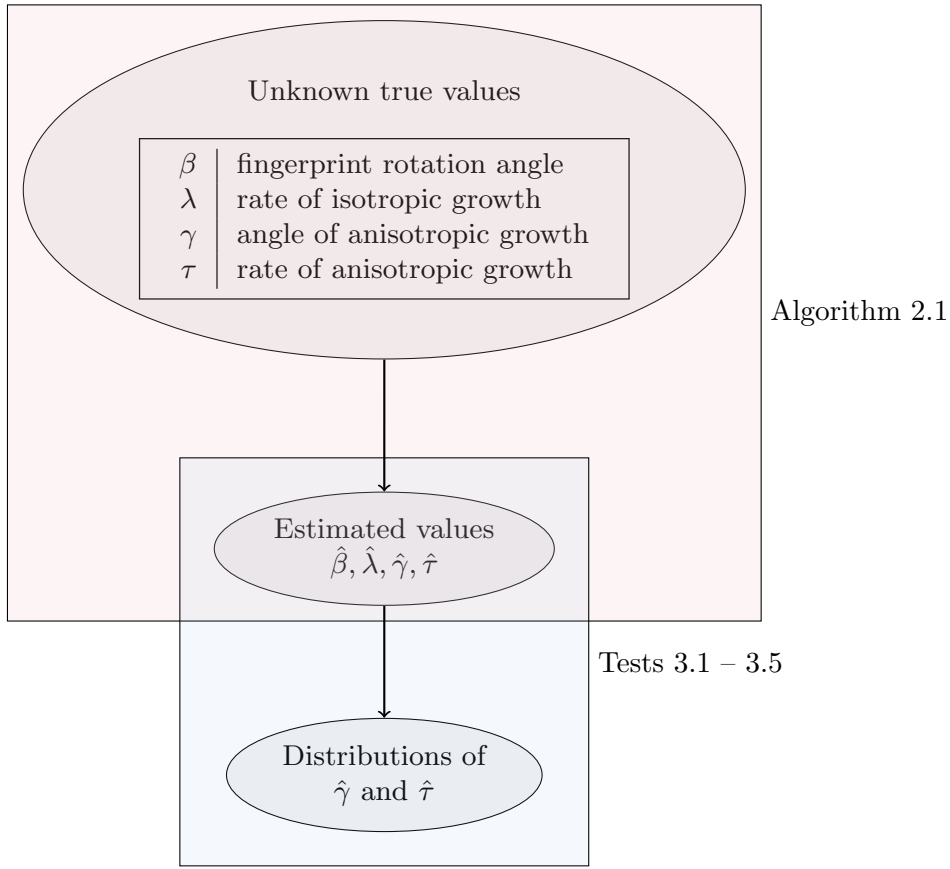

\end{center}

      \begin{table}[ht!]
      	\begin{center}
      		\begin{tabular}{|l|l|l|}
      			\hline \hline 
      			{Test\cellcolor[gray]{0.8}} & {Detail \cellcolor[gray]{0.8}} & {$H_0$\cellcolor[gray]{0.8}} \\
      			\hline \hline
      			\rowcolor{light-gray} Test \ref{test:Rayleigh} 
      			& parametric test for $\hat{\gamma}$ & \\
      			\rowcolor{light-gray} Test \ref{testfortau} 
      			& simulation test for $\hat{\tau}$ & \\
      			\rowcolor{light-gray} Test \ref{test:jointly} 
      			& simulation test jointly for $\hat{\gamma}$ and $\hat{\tau}$ & \multirow{-3}{*}{growth is isotropic} \\
      			\rowcolor{white} Test \ref{vM-anisotropy:test} 
      			& parametric test for distal growth & \\
      			\rowcolor{white} Test \ref{nonparametric-anisotropy:test} 
      			& nonparametric test for distal growth & \multirow{-2}{*}{growth is not distal}\\
      			\hline
      		\end{tabular}
      		\caption{\it Tests for the distributions of $\hat{\gamma}$ and $\hat{\tau}$ and null hypotheses $H_0$ to be rejected.}
      	\label{table:TestOverview}
      \end{center}
 	 \end{table} 

As our growth model can be viewed as a perturbation (or generative) model, it is not clear that estimation via a Procrustes-type algorithm will recover the modeled ``true'' anisotropic growth, at least asymptotically. In fact, it is well known from shape analysis, that under general error models, Procrustes means are usually \emph{inconsistent} with centers of perturbation models, cf. \cite{Lele93,Le98,KM97,H_Procrustes_10,DevilliersAllassonniereTrouvePenec2017}. Indeed, our simulations based on real distortions in Section \ref{scn:validation-simulation-schemes} indicate toward minor inconsistency, namely that true anisotropic growth is slightly overestimated and this effect wanes with increased anisotropy. Quantifying inconsistency in detail, as recently done by \cite{MiolaneHolmesPennec2017}, say, is beyond the scope of this paper.

Our paper is organized as follows. In Section \ref{scn:algo} we propose our algorithm to estimate anisotropic growth. Section \ref{scn:tests} then introduces our validation and simulation schemes, five tests, and we validate and simulate sensitivity of four tests. Finally, in Section \ref{scn:fidelity} we discuss the consistency of our algorithm for the growth model and, along with it, also simulate sensitivity for scenarios of varying unknown growth. We conclude with an outlook to future applications. 

We provide an R-package\footnote{\url{
http://www.stochastik.math.uni-goettingen.de/AnisotropicGrowth_0.1.3.tar.gz}} implementing our algorithms and tests, including the manually marked and matched minutiae pattern data. Code producing all computations, simulations and graphics presented can be found on the man pages of corresponding routines.

\section{Modeling and Estimating Anisotropy}\label{scn:algo}

\subsection{A Model for 
Uni-Directional Anisotropic Growth}

    For convenience, we identify locations $(x,y)$ in the two-dimensional real plane $\RR^2$ with the complex numbers $x+iy \in \CC$. In particular, $n\in \NN$ minutiae locations in a fingerprint are then given by the complex numbers 
    $$z_j = x_j +i y_j\in \CC\,\quad j=1,\ldots,n\,.$$
    In our model, we assume that there are two linearly superimposed growth effects for a fingerprint, both of which originate from a central location which can be approximated by the center of all minutiae. In fact, w.l.o.g. we may assume in the following that all minutiae configurations are centered, i.e. $$\frac{1}{n}(z_1+\ldots+z_n)=0\,.$$
    Then the first growth effect, which is isotropic at rate $\lambda >0$, is modeled via $$ z_j \mapsto \lambda z_j\,.$$
    The second growth effect is anisotropic and occurs along an axis determined  by $w = e^{i\gamma}, \gamma \in [0,\pi)$ at a rate $\tau >0$, resulting in
    \begin{eqnarray*} z_j\mapsto z_j + \tau \langle z_j,w\rangle w &=& z_j + \tau\frac{e^{i\gamma}\overline{z_j} + e^{-i\gamma}z_j}{2}e^{i\gamma}\\
     &=& \frac{2+\tau}{2}z_j + \frac{\tau}{2} e^{2i\gamma}\overline{z_j} 
    \,.
    \end{eqnarray*}
    Here we have used the standard Euclidean inner product of $\RR^2 \cong \CC$, 
    $$\langle z,w\rangle =\RE(\bar z w) = xu + yv\mbox{ for }z=x+iy, w=u+iv\,.$$
    
    In consequence, matching a centered query fingerprint with $n$ minutiae locations $$Z': z'_1,\ldots,z'_n$$
    with a centered template fingerprint with $n$ minutiae locations $$Z: z_1,\ldots,z_n$$ can be achieved by minimizing the distance functional
    \begin{eqnarray}\nonumber\label{eq:dist-fcnal} F(Z,Z'; \gamma, \beta,\tau,\lambda) &:=&  \sum_{j=1}^n \big| z'_j - \lambda e^{i\beta}(z_j + \tau \langle z_j,w\rangle w)\big|^2\\
     &=& \sum_{j=1}^n \left| z'_j - \lambda e^{i\beta}\left(\frac{2+\tau}{2}z_j + \frac{\tau}{2} e^{2i\gamma}\overline{z_j}\right)\right|^2
    \end{eqnarray}
    over the parameter space $$ \theta=(e^{i2\gamma}, e^{i\beta},\tau,\lambda) \in (\SSS^1)^2 \times [0,\infty)^2=\Theta\,.$$
    Here, $\SSS^1 = \{z\in \CC: |z|=1\}$ is the unit circle and $\beta \in [-\pi,\pi)$ models a rotational angle between $Z$ and $Z'$.
    
    We remark that with this choice of parameters we have avoided the following ambiguity. If $\tau < 0$ were possible, then isotropic growth followed by anisotropic growth could be equally modeled by larger isotropic growth followed by anisotropic shrinkage.
    
\subsection{Estimating Isotropic and Uni-Directional Anisotropic Growth}   
    Every parameter $\theta^*=(e^{i2\gamma^*}, e^{i\beta^*},\tau^*,\lambda^*)$ minimizing \eqnref{eq:dist-fcnal} is a critical point of \eqnref{eq:dist-fcnal}. Let $a\in\CC$. Exploiting the properties
    \begin{eqnarray*}
    \frac{\partial}{\partial \beta} \left(|a  - e^{i\beta}a'|^2\right) &=& -\frac{\partial}{\partial \beta}  (ae^{-i\beta}\overline{a'}+ e^{i\beta} a'\bar a) ~=~ i(e^{-i\beta}a\overline{a'}- e^{i\beta} \bar a a')\,,\\
       \frac{\partial^2}{\partial \beta^2}\left( |a  - e^{i\beta}a'|^2\right) &=& e^{-i\beta}a\overline{a'}+ e^{i\beta} \bar a a'\,,\\
    \frac{\partial}{\partial \lambda} \left(|a  - \lambda a'|^2\right) &=&2\lambda |a'|^2 - (a\overline{a'}+\bar a a') \, ,
    \end{eqnarray*}
     yields that
    \begin{eqnarray*}
     \argmin_{e^{i\beta}} |a  - e^{i\beta}a'|^2 &=& \frac{a\overline{a'}}{|a\overline{a'}|}\,,\\
    \argmin_{\lambda >0} |a  - \lambda a'|^2 &=& \frac{1}{2} \frac{a\overline{a'}+\bar a a'}{|a'|^2}\,.
    \end{eqnarray*}
    We thus obtain
    $$\begin{array}{rclcl} 
     e^{2i\gamma^*} &=&  e^{-i\beta^*}\frac{\sum_{j=1}^n \left(z'_j -\lambda^*e^{i\beta^*} \frac{2+\tau^*}{2} z_j\right) z_j}{\left|\sum_{j=1}^n \left(z'_j -\lambda^*e^{i\beta^*} \frac{2+\tau^*}{2} z_j\right)\,z_j\right|}&=:&A(e^{i\beta^*},\tau^*,\lambda^*)
     \\
     e^{i\beta^*}  &=&  \frac{\sum_{j=1}^n z'_j \left(\frac{2+\tau^*}{2}\overline{z_j} + \frac{\tau^*}{2} e^{-2i\gamma^*}z_j\right)}{ \left|\sum_{j=1}^n z'_j \left(\frac{2+\tau^*}{2}\overline{z_j} + \frac{\tau^*}{2} e^{-2i\gamma^*}z_j\right)\right|}&=:&B(e^{2i\gamma^*},\tau^*,\lambda^*)
     \\
     \tau^*&=&\frac{2}{\lambda}\,\frac{\sum_{j=1}^n\RE\left(\left(z'_j -\lambda^*e^{i\beta^*}z_j\right) \left(e^{-2i\gamma^*}z_j + \overline{z_j}\right)\right)}{\left|\sum_{j=1}^n \left(e^{-2i\gamma^*}z_j + \overline{z_j}\right)\right|^2} &=:&C(e^{2i\gamma^*},e^{i\beta^*} , \lambda^*)
     \\
     \lambda^* &=& \frac{\sum_{j=1}^n\RE\left(\overline{z'}_j e^{i\beta^*}\big((2+\tau^*)z_j + \tau^*e^{2i\gamma^*}\overline{z_j}\big) \right)}{\left|\sum_{j=1}^n\left((2+\tau^*)z_j + \tau^*e^{2i\gamma^*}\overline{z_j}\right)\right|^2}&=:&D(e^{2i\gamma^*},e^{i\beta^*} , \tau^*)\,.
     \end{array}
     $$

    Numerical experiments show that the following algorithm usually converges quickly. 
    \begin{Algo}\label{mainalgo}
    With an error threshold $\epsilon >0$,
    \begin{description}
     \item[$k=0$:] initialize $e^{i\beta_0} := 1 =: e^{2i\gamma_0}$ as well as $\lambda_0:=1$, $\tau_0 :=0$
     \item[$k \to k+1$:] compute 
     \begin{eqnarray*}
      e^{i\beta_{k+1}}&:=& B(e^{2i\gamma_k},\tau_k,\lambda_k)\\
      \lambda_{k+1} &:=& D(e^{2i\gamma_k},e^{i\beta_{k+1}},\tau_k)\\
      e^{2i\gamma_{k+1}}&:=& A(e^{i\beta_{k+1}},\tau_k,\lambda_{k+1})\\
      \tau_{k+1}&=& C(e^{2i\gamma_{k+1}},e^{i\beta_{k+1}},\lambda_{k+1})\,.      
     \end{eqnarray*}
      \item[break if] 
       $$ |e^{2i\gamma_{k+1}}-e^{2i\gamma_{k}}|^2 + |e^{i\beta_{k+1}}-e^{i\beta_{k}}|^2 + |\tau_{k+1}-\tau_k|^2 +|\lambda_{k+1}-\lambda_k|^2 < \epsilon\,.$$
    \end{description}
    \end{Algo}

    \begin{Rm} This algorithm is inspired by \emph{General Procrustes Analysis} by \cite{Gow}, cf. also \cite{DM98}. Setting $\tau \equiv 0$ and not estimating $\gamma$ is equivalent to \emph{Full Procrustes Analysis}, additionally setting $\lambda\equiv 1$ leads to \emph{Partial Procrustes Analysis}.\end{Rm} 
    
    The validation and sensitivity studies following each test in the following Section \ref{scn:tests} illustrate that Algorithm \ref{mainalgo} rather well identifies $\gamma$ and $\tau$. This can also be seen in Section \ref{scn:fidelity} where we also illustrate that Algorithm \ref{mainalgo} rather well recovers underlying variable growth with a tendency to overestimate $\tau$ and underestimate $\lambda$, where this latter effect strongly wanes with increased growth and anisotropy. 
    
\section{Testing For Anisotropic Growth}\label{scn:tests}

      Suppose we have fingerprints from $P$ individuals. For convenience we assume that for every individual $p=1,\ldots,P$, at a first time point, fingerprint $Z^{p}_0$ has been recorded and at other time points $k_p$ fingerprints have been recorded, denoted by $Z^{p}_1,\dots, Z^{p}_{k_p}.$  For $Z=Z^{p}_0$ and $Z'=Z^{p}_1,\dots, Z^{p}_{k_p}$, Algorithm \ref{mainalgo} provides us with estimates
      \begin{eqnarray}\label{tau-gamma-estimates}
       \hat{\tau} = \left( \hat{\tau}_{k}^p \right)_{\substack{1 \leq p \leq P \\ 1 \leq k \leq k_p}}&\mbox{and}& \hat{\gamma} = \left( \hat{\gamma}_{k}^p \right)_{\substack{1 \leq p \leq P \\ 1 \leq k \leq k_p}},
    \end{eqnarray}
      of the parameters responsible for possible anisotropic growth. Notably, the number $n_k^p$ of minutiae common to $Z_0^p$ and $Z_k^p$ can be different from the number $n_{k'}^p$ of minutiae common to $Z_0^p$ and $Z_{k'}^p$ for $k\neq k'$. Table  \ref{table:TestOverview} gives an overview over different tests introduced in the following in order to analyze the distribution of these estimates.
      Our testing scheme naturally generalizes to situations with several impressions for the first time point. 
       
       \subsection{Validation and Simulation Schemes}\label{scn:validation-simulation-schemes}
       For the validations and simulations below
       we have  selected $P=8$ individuals beginning with the second individual from the FVC2002 DB2 database (\cite{FVC2002}). Here, for every individual $p$, $k_p+1=8$ impressions have been taken in subsequent sessions and we have manually marked and matched all corresponding minutiae patterns for $p=1,\ldots,8$. 
  	  \begin{enumerate}
      \item[(i)] For every individual $p=1,\ldots,8$ the first fingerprint is semi-manually aligned using the (extended) quadratic differential tool from \cite{HHM08,GottschlichTamsHuckemann2017perfect} such that the nearly parallel friction ridges above the crease are oriented vertically, so that the positive horizontal axis points into the distal direction, see Figure \ref{fig:ausrichtengraphik}. This gives the aligned point patterns $(Z_0^p)_{1\leq p\leq 8}$.      

      \begin{figure}[ht!]
      	\centering
      	\includegraphics[width=0.8\linewidth]{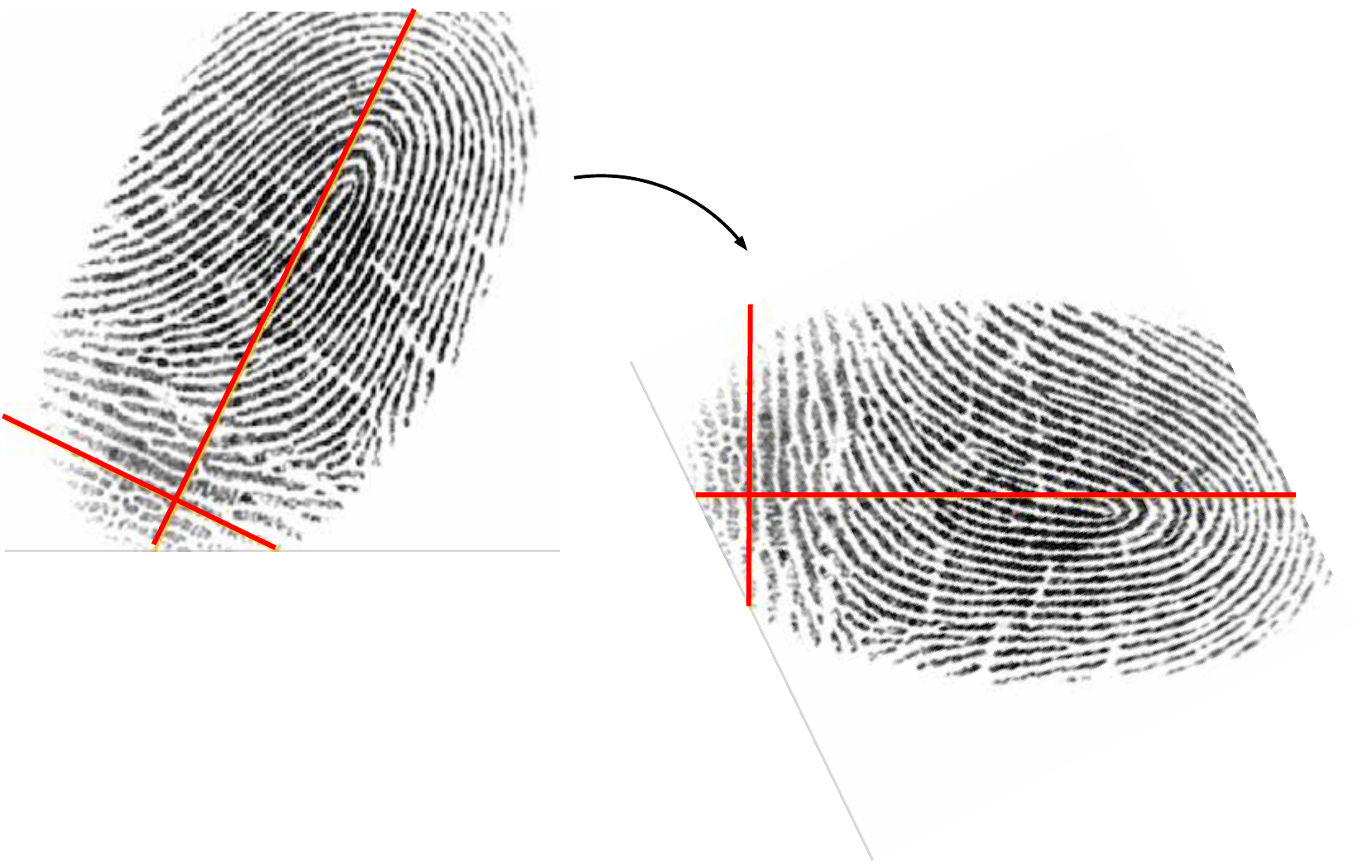}
      	\caption{\it Semi-automated alignment (FVC2002 DB2 Finger 7 Print 4) of crease lines along the vertical axis, using the (extended) quadratic differential tool from \cite{HHM08,GottschlichTamsHuckemann2017perfect}, such that the horizontal axis coincides with the distal axis.}
      	\label{fig:ausrichtengraphik}
      \end{figure}
       
      \item[(ii)] As described above, we apply Algorithm \ref{mainalgo} to the aligned point pattern $Z^p_0$ and the corresponding minutiae point patterns $Z^p_1,\ldots,Z_{7}^p$ extracted from the other impressions for $p=1,\ldots,8$, giving $\hat \tau$ and $\hat \gamma$ as in \eqnref{tau-gamma-estimates} which are used to validate the tests for anisotropy in Sections \ref{Test_graphical} and \ref{Test_tau}.
      
            \item[(iii)] Next, in order to simulate anisotropic growth of a given rate $\tau$ and orientation $e^{i\gamma}=w$ we align each $Z^p_1,\ldots,Z_{7}^p$ to $Z_0^p$  using Partial Procrustes Analysis giving rotations $\beta_1^p,\ldots,\beta_7^p$, 
             and set
      \[ \widetilde{Z}_k^p=e^{i\beta^p_k}Z_k^p+\tau ww^Te^{i\beta^p_k}Z_k^p \quad k=1,\ldots, 7, p=1,\ldots,8\,.\] 
      Applying Algorithm \ref{mainalgo} to $Z_0^p$ and $\widetilde{Z}_1^p,\ldots,\widetilde{Z}_7^p$ ($p=1,\ldots,8$), the corresponding estimates $\hat \tau$ and $\hat \gamma$ are then subjected to tests in Sections \ref{Test_tau}, \ref{Test_distal_param} and \ref{Test_distal_nonparam}.
      \item[(iv)] In fact, for the latter two tests we require a \emph{fingerprint alignment precision parameter} $\eta$, reflecting how accurately we can detect the distal axis in a given fingerprint. To this end, we align all other fingerprints as in (i) to obtain the point patterns $\left( {Z'}_{k}^p \right)_{\substack{1 \leq p \leq 8 \\ 1 \leq k \leq 7}}$ and apply once again Algorithm \ref{mainalgo} where we record only the nuisance parameters $\hat\beta = \left( \hat{\beta}_{k}^p \right)_{\substack{1 \leq p \leq 8 \\ 1 \leq k \leq 7}}$. Figure \ref{fig:Boxplot_beta.png} shows the distribution of $\hat\beta$ and we choose for the fingerprint alignment precision parameter approximately the maximal quartile 
      \begin{equation}
       \label{epsilon:def}
        \eta=0.075\,
      \end{equation}
      which corresponds to approximately $4$ degrees.

      \begin{figure}[h!]
      	\centering
      	\includegraphics[width=0.6\linewidth]{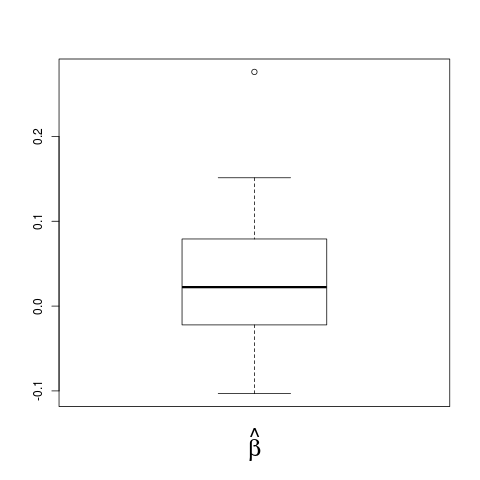}
      	\caption{\it Fingerprint alignment precision: Boxplot of $56$ rotation angles $\hat\beta $ 
      	obtained from Algorithm \ref{mainalgo} applied to semi-automatically aligned fingers, cf. Figure \ref{fig:ausrichtengraphik}.}
      	\label{fig:Boxplot_beta.png}
      \end{figure}

      \item[(v)] We simulate realizations of homogeneous Poisson point processes in rectangles of varying aspect ratios and distort them by i.i.d.~Gaussian noise of varying variances, truncated at $5$ standard deviations. In the following model the minutiae of $Z_k^p$ result from the ones of $Z_0^p$ ($k=1,\ldots,k_p$, $p=1,\ldots,P$) by isotropic growth and truncated Gaussian perturbation,
            \begin{equation}
      \label{H_0:eq}
      z_{kj}^p = {\lambda}_k^p e^{i\beta_k^p} (z_{0j}^p + \epsilon_{kj}^p),~\epsilon_{kj}^p \iid \cN_{\rm truncated}(0,\sigma^2I_2), 
      ~j=1,\ldots,n^p_k\,.
      \end{equation}
       Of course this model, due to correlations caused by limited skin elasticity, is not realistic for single fingerprints. It is much more realistic, however, if we average over many imprints. Notably, in order to ensure that centeredness of $Z_0^p$ implies centeredness of $Z_k^p$ we would have to subtract the $\hat{\lambda}_k^p$-fold of $\bar\epsilon = \frac{1}{n^p_k}\sum_{j=1}^{n^p_k}\epsilon_j$. In the asymptotic scenario considered, however, we can neglect this effect, as it is of order $O_p\left(1/\sqrt{n^p_k}\right)$. 
       
      When we set ${\lambda}_k^p =1$, mimic the noise level $\sigma$ due to distortions in real fingerprints and mimic the aspect ratios of real minutiae point patterns, we obtain a \emph{reference distribution} for a test for $\tau$.  In order to verify that our algorithm is not affected by anisotropy present in  point patterns, we have also considered rectangles of varying aspect ratios and varying isotropic growth. This validation is discussed in more detail in the following Section \ref{Test_graphical}.     
      
      \item[(vi)] Finally, in order to assess validity and sensitivity of our entire tool chain in Section \ref{scn:fidelity}, also for variable growth, we simulate growth as in (iii) also for $\tau$ and $\lambda$ following truncated Gaussians.
     \end{enumerate}

      \subsection{Test for $\hat \gamma$}\label{Test_graphical}

      Recall that the angle $\gamma \in [0,\pi)$ gives the orientation of anisotropic growth if $\tau>0$ and in case of no anisotropy ($\tau=0$), although $\gamma$ is meaningless, Algorithm \ref{mainalgo}, by design, returns estimates $\hat \gamma$. In that case, one may be tempted to expect a uniform distribution on $[0,\pi)$ and apply a standard test for uniformity of $2\gamma$ on $[0,2\pi)$ such as the \emph{Rayleigh Test}, cf. \cite[pp. 94--95]{MJ00}. Here, one considers the \emph{resultant length}, which is the length of the mean direction,
      \begin{eqnarray}\label{resultant}
       \hat R &=& \frac{1}{\sum_{p=1}^Pk_p}\left|\sum_{p=1}^P\sum_{k=1}^{k_p} e^{2i\hat\gamma_j^p}\right|\,,
       \end{eqnarray}
      and, under uniformity, the statistic $ 2\sum_{p=1}^P k_p  \hat R^2$ follows asymptotically for $P\to \infty$ a $\chi^2_2$ distribution. Thus, with the $\alpha$-quantile $\chi^2_{2,\alpha}$ of a $\chi^2_2$ we obtain the first test. 

     \begin{Test}\label{test:Rayleigh} Reject $H_0$ that there is no anisotropic growth effect with confidence $1-\alpha$ $(\alpha \in [0,1])$ if
      $$2\sum_{p=1}^P k_p  \hat R^2 > \chi^2_{2,1-\alpha}\,.$$
     \end{Test}

     \begin{figure}[ht!]
      	\centering
      	\includegraphics[width=0.5\linewidth]{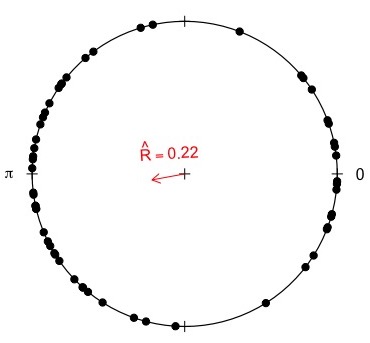}
	\includegraphics[width=0.45\linewidth]{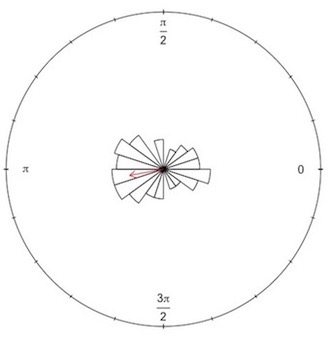}
      	\caption{\it Left: circular plot of $56$ estimates of $e^{2i\gamma}$ by Algorithm \ref{mainalgo} and their resultant length $R$. Right: their suitably binned rose diagram (cf. \cite{MJ00}). In both displays, the red arrow points toward the extrinsic mean from \eqnref{eq:extrinsic-mean}.
      	}
	\label{fig:31graphicaltestresults}
	\end{figure}

      \begin{figure}[ht!]
  		\centering
  		\includegraphics[width=0.7\linewidth]{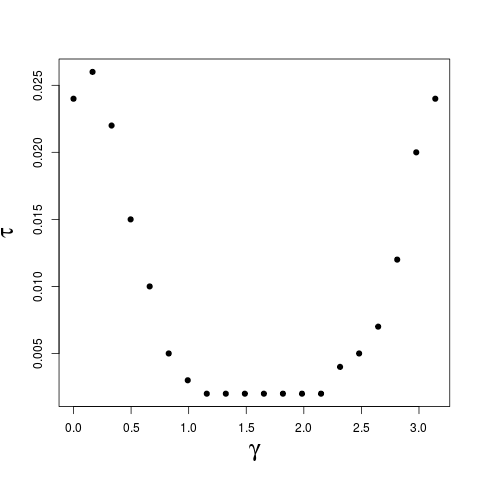}
  		\caption{\it The smallest $\tau$ such that  
  		Test \ref{test:Rayleigh}  
  		significantly detects anisotropy over varying orientations $\gamma$ of growth. }
  		\label{fig:graphicgammatauRquer}
  	\end{figure}

  	\paragraph{Validation.}
    As described in (ii) of Section \ref{scn:validation-simulation-schemes}, for the data at hand, we have 
 	\[2*56*0.2188^2 = 5.36\ngtr \chi^2_{2,0.95}= 5.99\,,\]
      so, as expected, we cannot reject uniformity with $95\,\%$ confidence. In Figure \ref{fig:31graphicaltestresults} we display the doubled estimated orientations. Although we can clearly see a bimodal pattern, with a clear preference for distortions along the  medial-lateral axis ($2\gamma = \pi$) and smaller preference along the proximal-distal axis ($\gamma =0$), as we have tested, this is not significant. One may be inclined to attribute this effect to the algorithm used and the anisotropy inherent in most minutiae patterns, namely that they extend more along the proximal-distal axis than along the medial-lateral axis. Simulations with larger anisotropy in the point patterns, as described in (v) of  Section \ref{scn:validation-simulation-schemes}, however, endorse uniformity of $2\gamma$ under $H_0$.
      
      More precisely, as described in (v) of Section \ref{scn:validation-simulation-schemes} we mimic the number of fingers (8) and impressions (8), minutia intensity (22) and noise's standard deviation (7). For rectangles with horizontal lengths $60,100,200,400,600$ and vertical lengths $200$ with isotropic growth rates $\lambda = 1,2,3,4$ we have simulated a total of $1'000$ point patterns and Test \ref{test:Rayleigh} rejected isotropic growth $61$ times at level $\alpha =0.05$ and $15$ times at level $\alpha = 0.01$, i.e. the test roughly holds the size.

      \paragraph{Sensitivity.}	 For every $\gamma\in\left\{ 0,\frac{1}{20}\pi,\dots,\frac{19}{20}\pi\right\}$ and $\tau >0$ we have simulated data as in (iii) from Section \ref{scn:validation-simulation-schemes}, and we have computed the minimal $\tau = \tau(\gamma)$ for which Test \ref{test:Rayleigh} 
 	 rejects $H_0$ that there is no anisotropic growth with confidence $0.95$.  
 	 The corresponding values are plotted in Figure \ref{fig:graphicgammatauRquer} and we conclude that, for the data at hand, Test \ref{test:Rayleigh} significantly detects anisotropy if $\tau \geq 0.026$. Because under $H_0$ (growth is isotropic), for this dataset, $\hat \gamma$ is biased towards $\pi/2$ (corresponding to the medial-lateral axis), detection of anisotropy directed along the proximal-distal axis ($\gamma = 0,\pi$) is slightly more difficult than along the medial-lateral axis ($\gamma=\pm\pi/2$).

      \subsection{Testing for $\hat{\tau}$}\label{Test_tau}
      
      While under $H_0$ (no anisotropy), $\hat \gamma$ is rather uniformly distributed, the distribution of $\hat \tau$ is less clear. Since $\hat \lambda$ and $\hat \lambda(1+\hat \tau)$ can be viewed as the smallest and the largest distortions, respectively, one may want to model their distribution by the likelihood ratio statistic
            \begin{equation}
      \label{sphericity:stats}  \rho_k^p = 2n^p_k \log \frac{\hat a_k^p}{\hat g_k^p}\,,\end{equation}
      which, under Gaussianity, asymptotically ($n^p_k\to \infty$) follows a $\chi^2_{2}$-distribution,
      where $\hat a_k^p$ denotes the arithmetic mean and $\hat g_k^p$ the geometric mean of the two eigenvalues of the empirical covariance matrix (cf. \citet[p. 124 and p. 134]{MardiaKentBibby1980}). This is not true, however, for the estimates produced by Algorithm \ref{mainalgo}.  Figure \ref{fig:qqrho-chisq} shows that under $H_0$ the r.h.s of  \eqnref{sphericity:stats} deviates in size and shape considerably from $\chi^2_2$. 
      
  	\begin{figure}[hb!]
  		\centering
  		\includegraphics[width=0.7\linewidth]{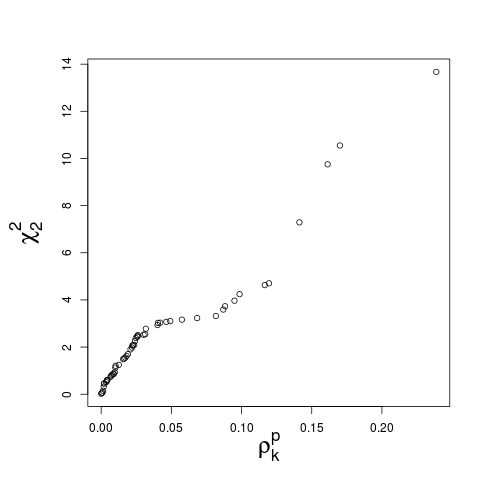}
  		\caption{\it QQ-plot of the likelihood ratio statistic from \eqnref{sphericity:stats} vs. $\chi^2_2$.}
  		\label{fig:qqrho-chisq}
  	\end{figure}
      
      For this reason we simulate $\tau$ from Poisson deviates, as in (v) of Section \ref{scn:validation-simulation-schemes} where we mimic the number of fingers ($8$) and impressions ($8$), of minutiae (their rounded mean is $22$ which we take for the minutiae intensity), aspect ratios of point patterns (we take their mean maximal extensions of $[-95,95]\times [160,160]$) and noise due to distortions (we choose a standard deviation of $7$ which is the rounded root mean square of empirical variances) from the $64$ fingerprints at hand. The  
      \begin{align}\label{tau-reference:eq} \widetilde{\tau} &= \left( \widetilde{\tau}_{k}^p \right)_{\substack{1 \leq p \leq P \\ 1 \leq k \leq k_p}}\end{align}
      thus obtained serve as the \emph{reference sample}.     
      
      \begin{Test} 
      \label{testfortau} Reject $H_0$ that there is no anisotropic growth effect with confidence $1-\alpha$ $(\alpha \in [0,1])$ if 
      a test for equality of distributions of samples, e.g. the \emph{Kolmogorov-Smirnov Goodness of Fit Test}, rejects the equality of distributions of $\hat \tau$ and $\widetilde{\tau}$.
      \end{Test}

      \paragraph{Validation.} For the data at hand, the above test roughly holds the size. We have simulated $1,\!000$ reference samples $\widetilde{\tau}$, and for the confidence level $\alpha = 0.05$ we have observed a size of $0.952$. For the confidence level $\alpha = 0.01$ the size was $0.993$.
      
  	\begin{figure}[ht!]
  		\centering
  		\includegraphics[width=0.7\linewidth]{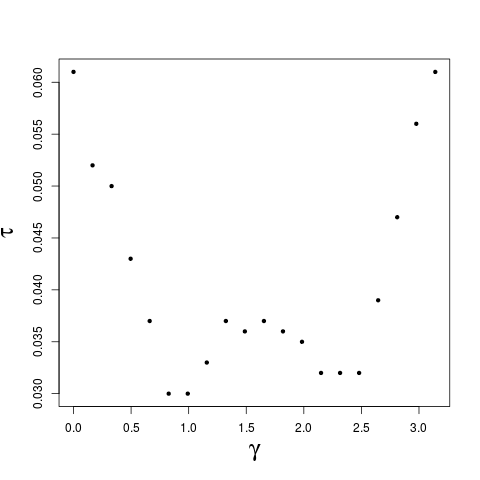}
  		\caption{\it The smallest $\tau$ such that  
  		Test \ref{testfortau}  
  		significantly detects anisotropy over varying orientations $\gamma$ of growth. }
  		\label{fig:graphicgammatau}
  	\end{figure}
 	 \paragraph{Sensitivity.}
 	 As in the previous section, for every $\gamma\in\left\{ 0,\frac{1}{20}\pi,\dots,\frac{19}{20}\pi\right\}$ and $\tau >0$ we have simulated a single reference sample $\widetilde{\tau}$ from Equation (\ref{tau-reference:eq}), simulated data as in (iii) from Section \ref{scn:validation-simulation-schemes}, and we have computed the minimal $\tau = \tau(\gamma)$ for which 
 	 the above test 
 	 rejects $H_0$ that there is no anisotropic growth with confidence $0.95$.  
 	 The corresponding values are plotted in Figure \ref{fig:graphicgammatau} and we conclude that we can significantly detect anisotropy if $\tau \geq 0.061$. Notably, this number is random as it depends on the specific reference sample. Moreover, with the specific reference sample, as in  Figure \ref{fig:graphicgammatauRquer}, detection of anisotropy directed along the proximal-distal axis ($\gamma =  0,\pi$) is slightly more difficult than along the medial-lateral axis ($\gamma = \pm \pi/2$). 
%
      \subsection{Testing Jointly for $e^{2i\hat{\gamma}}$ and $\hat{\tau}$} \label{Test_jointly}
      
      As we have seen in the previous section, the marginal distribution of $\hat \tau$ and thus of $\hat \rho$ obviously depends on the algorithm used and cannot be easily assessed over distributions of eigenvalues, say \citeegp{silverstein1995empirical,johnstone2001distribution,nadler2011distribution,wei2012exact}. 
      In consequence, we propose a test that is similar in spirit to the previous Test \ref{testfortau}. 
   
      Marking a large database and / or employing resampling methods by leaving out some minutiae, say, one can arrive at empirical confidence regions for 
      $$\hat\theta = \left(\hat R,\sum_{p=1}^P\sum_{k=1}^{k_p}\rho_k^p\right)\,$$
      with $\rho_k^p$ from  \eqnref{sphericity:stats}. To this end, simulate or obtain from a large database
      $$ \theta^{(i)}=\left(\hat R^{(i)}, \sum_{p=1}^P\sum_{k=1}^{k_p}\rho^{p,(i)}_k\right)\,, $$ 
      with common $\sum_{p=1}^P k_p = PK$ ($k_p = K$ for all $p=1,\ldots,P$), say, where $i = 1,\dots, N$ and $N$ is large (e.g. $N=1000$), denoting the number of replicates. Then for a given confidence level $\alpha\in [0,1]$ compute
       joint confidence rectangles 
      $$C_{1-\alpha} = \left\{\theta \in \RR^2: \big|\langle \theta-{\eta}_0,e_1 \rangle\big| \leq \lambda_1 c_{1-\alpha},~\big|\langle \theta-\theta_0,e_2 \rangle\big| \leq \lambda_2 c_{1-\alpha}\right\}$$
      parallel to the eigenvectors $e_1,e_2$ of the empirical covariance matrix with corresponding eigenvalues $\lambda_1,\lambda_2 >0$ of the $\theta^{(i)}$, centered at their sample mean $\theta_0$, by choosing suitable $c_{1-\alpha} >0$ such that
      $$ \sharp\{\theta^{(i)}\in C_{1-\alpha}: i=1,\ldots,N\} = \left\lfloor (1-\alpha) N\right\rfloor\,.$$
      
      \begin{Test}\label{test:jointly} Reject $H_0$ that there is no anisotropic growth effect with confidence $1-\alpha$ ($\alpha \in [0,1]$) if
       $$\hat\theta = \left(\hat R,\sum_{p=1}^P\sum_{k=1}^{k_p}\rho_k^p\right)\not\in C_{1-\alpha}\,. $$ 
      \end{Test}

      Implementation, validation and sensitivity study of this test is left for future work with access to larger marked databases. In fact, due to joint testing, we expect that Test \ref{test:jointly} has a higher power than the previous Tests \ref{Test_graphical} and \ref{Test_tau}, and can be used for smaller sample sizes with $P < 8$ and $k_p < 8$. 
      
      \subsection{Testing for Distal Growth: Parametric}\label{Test_distal_param}
      If we reject isotropic growth, we would like to test whether growth prefers a particular axis. For the task at hand, we consider only the proximal-distal axis.
      We note that the corresponding tests developed in this and in the subsequent section have a straightforward generalization to test for any given axis and we use these generalized tests in the validation and sensitivity studies below.
      
      In order to detect proximal-distal growth we want to reject the null hypothesis that growth occurs along any other axis. We have to take into account, though, that we can only identify the  proximal-distal axis, which we aligned with the horizontal axis, with a certain precision $\eta\in [0,\pi/2)$. This alignment precision parameter $\eta$ has to be estimated separately and to this end, in Section \ref{scn:validation-simulation-schemes} under (iv) in \eqnref{epsilon:def}, we have proposed a method. In consequence, setting for convenience $\mu = 2\gamma$, with $\gamma \in [-\pi/2,\pi/2)$, we consider the following hypotheses
           \begin{eqnarray} \label{test:distal-growth} 
       H'_0: |\mu| \geq \epsilon &\mbox{ vs. }&H'_1: |\mu| < \epsilon\,,
       \end{eqnarray}
      where, as desired, $H'_0$ reflects any axis other than the horizontal, under a given accuracy $\epsilon = 2\eta$. In this section, for the hypotheses in (\ref{test:distal-growth}), we propose a parametric test, and in the next section a nonparametric test.

      For the parametric test, for the null hypothesis we use an analog of the normal distribution for the circle, namely the \emph{von Mises distribution} with respect to the half circle density $d\gamma/\pi$
      $$ \gamma \mapsto I_0(\kappa)^{-1} \,e^{\kappa \langle e^{2i\gamma},e^{i\mu}\rangle} = I_0(\kappa)^{-1} \,e^{\kappa \cos(2\gamma-\mu)} $$ 
      with central angle $\mu \in [-\pi,\pi)$, concentration parameter $\kappa >0$ and integration constant
      $$ I_0(\kappa) = \int_{-\pi/2}^{\pi/2} e^{\kappa \cos 2\gamma}\,\frac{d\gamma}{\pi} = \int_{-\pi}^{\pi} e^{\kappa \cos t}\,\frac{dt}{2\pi}\,,$$
      which is given by the modified Bessel function of the first kind of order zero (e.g. \citet[p. 36]{MJ00}). 
      It is easy to see that the MLE for $\mu$ is given by the \emph{extrinsic mean} 
      \begin{eqnarray}\label{eq:extrinsic-mean}
      \overbar{\mu} &=& \Arg \left( \sum_{p=1}^P \sum_{k=1}^{K} e^{2i\hat{\gamma}_k^p}\right)\,,\end{eqnarray}
      where we conveniently choose the argument in $[-\pi,\pi)$. As detailed in \citet[pp. 70, 124]{MJ00}, $\overbar{\mu}$ conditioned on $\kappa$ and a resultant length $R$
      follows a von Mises distribution with the same $\mu$ and concentration parameter $ \kappa  R$. Hence, denoting with $\hat R$ the sample resultant length of \eqnref{resultant}, for the following test, we first estimate $\kappa$ by its MLE or approximate  \begin{eqnarray*}
      \hat \kappa = \begin{cases}
       		2 \hat{R} + \hat{R}^3 + \frac{5}{6} \hat{R}^5, & \text{if } \hat{R} < 0.53, \\
       		-0.4 + 1.39 \hat{R} + 0.43 \frac{1}{1-\hat{R}}, & \text{if } 0.53 \leq \hat{R} < 0.85, \\
          	\frac{1}{2(1-\hat{R})}, & \text{if } \hat{R} \geq 0.85, \\ 
      \end{cases}
      \end{eqnarray*}
      as discussed in \citet[Section 5.3.1]{MJ00} and then  numerically compute suitable 
      $0<\delta = \delta_{1-\alpha}<\epsilon$ such that
      \begin{eqnarray*}\label{vM-anisotropy:ori-problem} \nonumber
      1-\alpha &=&\Prb\left\{|\overbar{\mu}| > \delta :   \frac{1}{\sum_{p=1}^P k_p} \left|\sum_{p=1}^P\sum_{k=1}^{k_p} e^{2i\hat{\gamma}_k^p}\right| = \hat R, \mu = \epsilon\right\}
      \\
      &=& \frac{1}{I_0(\hat \kappa \hat R)}\int_{\delta}^{2\pi-\delta} e^{\hat \kappa \hat R \cos (\overbar{\mu}-\epsilon) }\,\frac{d\overbar{\mu}}{2\pi} \\
      &=& \frac{1}{I_0(\hat \kappa \hat R)}\int_{\delta}^{2\pi-\delta} e^{\hat \kappa \hat R \cos (\overbar{\mu}+\epsilon) }\,\frac{d\overbar{\mu}}{2\pi} \\
      &=& \Prb\left\{|\overbar \mu| > \delta :   \frac{1}{\sum_{p=1}^P k_p} \left|\sum_{p=1}^P\sum_{k=1}^{k_p} e^{2i\hat{\gamma}_k^p}\right| = \hat R, \mu = -\epsilon\right\}\,.   \nonumber
      \end{eqnarray*}
      In consequence, we have the following test.
      
      \begin{Test}\label{vM-anisotropy:test} Reject $H'_0$ that growth occurs not along the distal axis  with confidence $1-\alpha$ ($\alpha \in [0,1]$) if $$|\overbar \mu| <\delta\,.$$
      \end{Test}

            In fact, this test keeps the level only for $|\mu| = |\pm\epsilon|$. For $N= \sum_{p=1}^P n_k^p\to \infty$, due to asymptotic consistency of the MLE (e.g. \citet[p. 86]{MJ00}), under $|\mu| > |\epsilon|$ we have asymptotically that the rejection probability tends to zero and for $|\mu| <\epsilon$ the power tends asymptotically to one.

\subsection{Testing for Distal Growth: Nonparametric} \label{Test_distal_nonparam}
      For the nonparametric test 
      we first determine the extrinsic mean $\overbar \mu$ from \eqnref{eq:extrinsic-mean} and
      bootstrap the variance $V_{\overbar\mu^2}$ of its square. To this end, generate $B$ 
      bootstrap samples of the estimated gammas, compute the variance  $V^*_{|\overbar\mu|^2}$ of the $B$ squared extrinsic means and set
      $$ T = \frac{\overbar \mu^2 - \epsilon^2}{\sqrt{V^*_{\overbar \mu^2}}}\,,$$
      to 
      obtain the following nonparametric test. Here $\phi_{\alpha}$ is the $\alpha$ quantile of the standard normal distribution.
      \begin{Test}\label{nonparametric-anisotropy:test}
       Reject $H'_0$ that growth occurs not along the distal axis  with confidence $1-\alpha$ $(\alpha \in [0,1])$ if
       $$T < \phi_{\alpha}\,.$$
      \end{Test}
      
      \begin{Th}
       For $\alpha \in [0,1]$,  as $N= \sum_{p=1}^P n_k^p\to \infty$, Test \ref{nonparametric-anisotropy:test}  
       asymptotically keeps the level for  $|\mu| = |\epsilon|$.  Under $|\mu| > |\epsilon|$ we have asymptotically that the rejection probability tends to zero and for $|\mu| <\epsilon$ the power tends asymptotically to one. 
      \end{Th}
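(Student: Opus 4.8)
The plan is to treat Test~\ref{nonparametric-anisotropy:test} as a Wald-type test for the scalar $\mu^2$ and to reduce the three assertions to two asymptotic facts about the circular extrinsic mean $\overbar\mu$ of \eqnref{eq:extrinsic-mean}; granted these, everything follows from Slutsky's theorem, the delta method and the continuous mapping theorem, case-split according to whether $|\mu|$ equals, exceeds, or falls short of $\epsilon$.

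First I would pin down the behaviour of $\overbar\mu$ as $N\to\infty$, in the anisotropic regime in which $\gamma$ (hence $\mu = 2\gamma$) is identified. Since $P$ and the $k_p$ are fixed, the sum $s_N := \sum_{p=1}^P\sum_{k=1}^{k_p} e^{2i\hat\gamma_k^p}$ has a fixed number of summands, each of which converges in probability to $e^{i\mu}$ by consistency of the estimates returned by Algorithm~\ref{mainalgo} (the same consistency invoked after Test~\ref{vM-anisotropy:test}); hence $s_N \xrightarrow{P} \big(\sum_p k_p\big) e^{i\mu}$ and, for $\mu\in(-\pi,\pi)$, $\overbar\mu = \Arg(s_N)\xrightarrow{P}\mu$. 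Combining this with the $\sqrt{n_k^p}$-asymptotic normality of each $\hat\gamma_k^p$ gives a rate $r_N\to\infty$ and a variance $v>0$ with $r_N(\overbar\mu-\mu)\xrightarrow{d}\cN(0,v)$, and then, applying the delta method to $x\mapsto x^2$ at $\mu$, $r_N(\overbar\mu^2-\mu^2)\xrightarrow{d}\cN\big(0,\,4\mu^2 v\big)$. The second fact I need is that the bootstrap variance is consistent on the matching scale, $r_N^2\,V^*_{\overbar\mu^2}\xrightarrow{P}4\mu^2 v$; this is the validity of the resampling scheme for the squared extrinsic mean, and it is where the genuine work sits.

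Granting both facts, the case analysis is immediate. If $|\mu| = \epsilon$ — note $\epsilon = 2\eta > 0$, so $\mu\neq 0$ and the delta method above is non-degenerate — then $\epsilon^2 = \mu^2$ and
$$ T \;=\; \frac{\overbar\mu^2-\epsilon^2}{\sqrt{V^*_{\overbar\mu^2}}} \;=\; \frac{r_N(\overbar\mu^2-\mu^2)}{\sqrt{r_N^2\,V^*_{\overbar\mu^2}}} \;\xrightarrow{d}\; \cN(0,1), $$
by Slutsky's theorem (the numerator converging in distribution to $\cN(0,4\mu^2 v)$ and $\sqrt{r_N^2\,V^*_{\overbar\mu^2}}\xrightarrow{P}\sqrt{4\mu^2 v}>0$), so $\Prb(T<\phi_\alpha)\to\alpha$ and the test asymptotically keeps the level. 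If $|\mu|>\epsilon$, then $\overbar\mu^2-\epsilon^2\xrightarrow{P}\mu^2-\epsilon^2>0$ while $V^*_{\overbar\mu^2}\xrightarrow{P}0$, so $T\xrightarrow{P}+\infty$ and $\Prb(T<\phi_\alpha)\to 0$; if $|\mu|<\epsilon$, then $\overbar\mu^2-\epsilon^2\xrightarrow{P}\mu^2-\epsilon^2<0$ and again $V^*_{\overbar\mu^2}\xrightarrow{P}0$, so $T\xrightarrow{P}-\infty$ and $\Prb(T<\phi_\alpha)\to 1$.

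The main obstacle is supplying the two facts rigorously. First, Algorithm~\ref{mainalgo} is only an approximate M-estimator for the perturbation model of Section~\ref{scn:algo} (the mild inconsistency discussed in Section~\ref{scn:fidelity}), so the centering $\mu$ and the limiting variance $v$ must be justified within the idealized asymptotics in which $\hat\gamma_k^p$ is treated as a consistent, asymptotically Gaussian estimator; the natural tool is standard M-estimation theory applied to the distance functional \eqnref{eq:dist-fcnal} under a Gaussian perturbation, but the loss of identifiability as $\tau\downarrow 0$ and the circular geometry (the branch cut of $\Arg$, harmless for $\mu$ away from $\pm\pi$) need care. Second, the resampling in Test~\ref{nonparametric-anisotropy:test} is over the fixed collection of $\sum_p k_p$ estimated angles whose joint law concentrates at $e^{i\mu}$ as $N\to\infty$, and one must check that this bootstrap still reproduces the first-order variance of $\overbar\mu^2$; a clean alternative, if preferred, is to resample the minutiae within each match (for which $n_k^p\to\infty$), which reduces bootstrap consistency to the classical theory for smooth functionals of M-estimators and for the circular sample mean. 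Once these ingredients are in hand, nothing beyond Slutsky and continuous mapping is needed for the conclusions above.
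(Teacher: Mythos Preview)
Your argument follows the same skeleton as the paper: a CLT for $\overbar\mu$, the delta method to pass to $\overbar\mu^2$, bootstrap consistency for the variance, and consistency of $\overbar\mu$ to dispatch the cases $|\mu|\neq\epsilon$ via Slutsky. At this structural level the two proofs coincide.

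The difference lies in how the ingredients are obtained. The paper does not attempt to derive the CLT or consistency from properties of Algorithm~\ref{mainalgo}; instead it treats the collection $\{2\hat\gamma_k^p\}$ directly as circular data and simply cites existing limit theory for the extrinsic sample mean on a manifold --- \cite{HL96} or \cite{BP05} for the CLT, \cite{Z77} and \cite{BP03} for strong consistency --- together with \cite{Cheng2015} for bootstrap consistency of the variance. Your route, by contrast, builds the CLT up from M-estimation asymptotics for each $\hat\gamma_k^p$ with $P$ and the $k_p$ held fixed, and then confronts two issues the paper does not discuss: whether resampling from a fixed-size collection of concentrating angles actually reproduces the target variance, and whether Algorithm~\ref{mainalgo} delivers a consistent, asymptotically normal estimator in the first place (given the inconsistency flagged in Section~\ref{scn:fidelity} and the loss of identifiability near $\tau=0$). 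These are genuine technical points that the paper absorbs into its citations; your approach makes the hidden assumptions explicit, at the price of more work. Incidentally, your delta-method variance $4\mu^2 v$ is the correct one; the paper records $2\sigma^4$, which looks like a slip.
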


      \begin{proof}
       If $\mu = \epsilon$, we have asymptotic normality of the extrinsic sample mean 
       $$\sqrt{N}(\overbar \mu -\epsilon)\to \cN(0,\sigma^2)\,,$$
       with a suitable variance $\sigma^2>0$ guaranteed by \citet[Theorem 2]{HL96} or \citet[Theorem 3.1]{BP05}. Using the $\delta$-method, this translates at once to the squares 
       $$\sqrt{N}(\overbar \mu^2 -\epsilon^2)\to \cN(0,2\sigma^4)\,.$$
       The assertion for $\mu = \epsilon$  follows now because by 
       \citet[Corollary 1]{Cheng2015}, the variance can be consistently estimated by bootstrap samples. Similarly, obtain the assertion in case of $\mu = -\epsilon$. The assertions on the asymptotic level and the asymptotic power for $|\mu|\neq \epsilon$ follow at once from the consistency of the extrinsic mean (e.g. \cite{Z77} and \citet[Theorem 3.4]{BP03}, cf. also \citet[Theorem 4.1]{MPPPR07} for a similar argument).
      \end{proof}

      \begin{figure}[t!]
       \includegraphics[width=0.45\textwidth, clip=true, trim= 3cm 1cm 1cm 1cm]{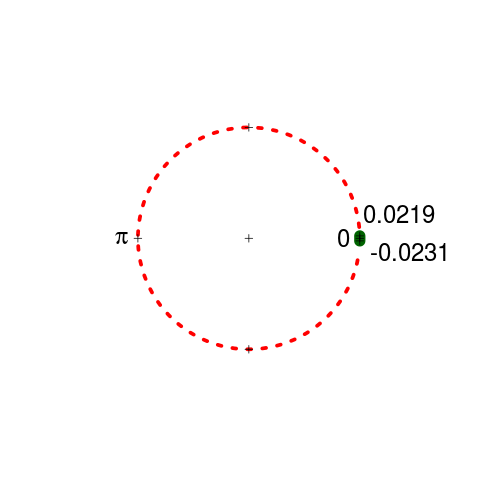} 
       \includegraphics[width=0.45\textwidth, clip=true, trim= 3cm 1cm 1cm 1cm]{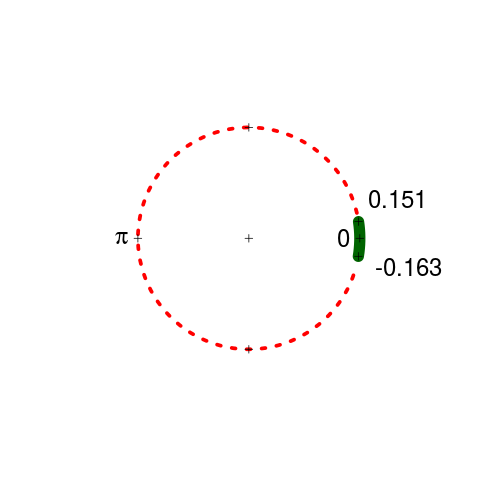}
       \caption{\it Simulating growth in directions $\gamma \in \left\{\frac{k \pi}{500}: k=0,\ldots,500\right\}$ of anisotropy $\tau=0.05$. Using a doubled alignment precision of $\epsilon = 0.15$, we apply Test \ref{vM-anisotropy:test} (left) and Test \ref{nonparametric-anisotropy:test} (right). Depicting the respective critical intervals for $2\gamma$, where distal growth is significantly detected, with solid green ($H'_0$ rejected) and the complement, where distal growth is not detected with dotted red ($H'_0$, that growth occurs non-distal, is not rejected). Notably, the angle zero corresponding to distal growth is always within the critical interval. }
       \label{fig:distal-growth-test}
      \end{figure}
\subsection{Validation and Sensitivity Study for Distal Growth Tests} 

      We have simulated growth of the marked fingerprints at hand as described in (iii) of Section \ref{scn:validation-simulation-schemes} with an anisotropy of $\tau = 0.05$ over varying directions $\gamma \in [0,\pi)$. Moreover we have applied Test \ref{vM-anisotropy:test} and Test \ref{nonparametric-anisotropy:test} with accuracy (doubled alignment precision) $\epsilon = 2\eta = 0.15$ for the doubled angles $2\gamma$, as estimated in \eqnref{epsilon:def}, and $B=100$. As Figure \ref{fig:distal-growth-test} illustrates, distal growth is significantly detected (i.e. $H'_0$, that growth occurs at a non-distal direction, is rejected) for both tests for the true value $\gamma=0$ which is contained in both critical intervals. This validates both tests. Moreover, for Test \ref{vM-anisotropy:test} the critical interval $[-0.0231, 0.0219]$ for $2\gamma$ is very narrow and well below alignment accuracy. In contrast, for Test \ref{nonparametric-anisotropy:test}, the critical interval $[-0.163, 0.151]$ for $2\gamma$ is of the order of alignment accuracy.
      
      
      Notably, the lack of symmetry of confidence intervals in Figure \ref{fig:distal-growth-test} is due to the data, as discussed in Section \ref{Test_tau}. Additionally in the right panel, the confidence interval is non-deterministic, due to bootstrapping.

              \begin{figure}[hb!]
      \centering
       \includegraphics[width=0.495\textwidth]{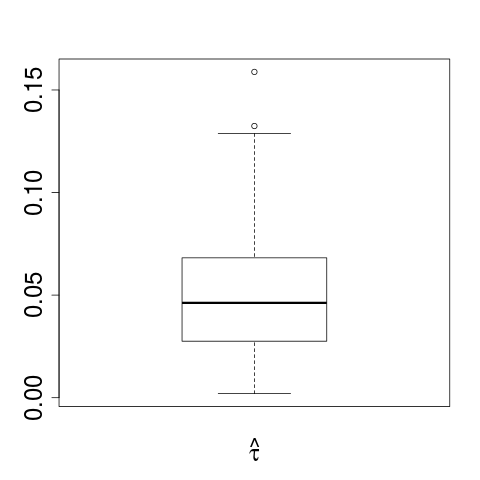}
       \includegraphics[width=0.495\textwidth]{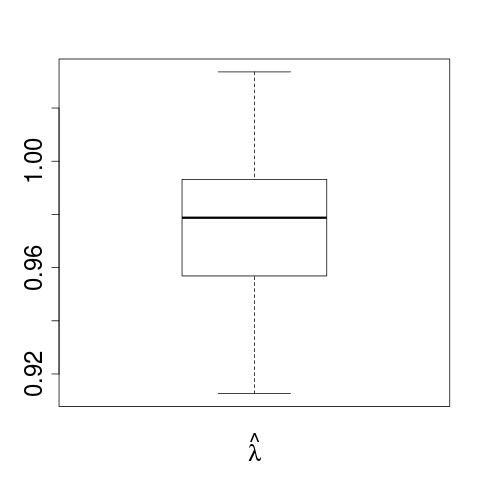}
       
              \caption{\it Distribution of $\hat \tau$ and $\hat\lambda$, estimated by Algorithm \ref{mainalgo}, in case of no growth at all: $\tau =0$ and $\lambda=1$.}\label{fig:simulationsvaryingtaulambda0}
              \end{figure}

 \section{Model vs. Algorithm Compatibility Study and Detection of Variable Growth} \label{scn:fidelity}

	Recall that, in order to validate Tests \ref{test:Rayleigh} and \ref{testfortau}, we applied Algorithm \ref{mainalgo} to the data at hand, which, as detailed before, features no growth. As elaborated in the introduction,  model and algorithm are not per se compatible. Rather, as is frequent in statistical shape analysis, we expect inconsistencies. 
	In Figure \ref{fig:simulationsvaryingtaulambda0} we report the algorithm's estimates $\hat \tau$ and $\hat\lambda$ for a model's ``true'' $\tau=0$ and ``true'' $\lambda=1$. Obviously, the model's $\tau$ is 
	overestimated 
	and $\lambda$ is 
	underestimated. 
	The tendency to overestimate a model's $\tau$ and underestimate a model's $\lambda$ is also visible in Figures \ref{fig:simulationsvaryingtaulambda1} and \ref{fig:simulationsvaryingtaulambda2} where we simulate non-zero growth. Notably, with increased growth this effect disappears quickly. 
	
	Next, for the data at hand, we simulate growth along the  proximal distal axis, but this time, for every individual imprint, we use different $\tau$ and $\lambda$, corresponding to imprints of children with varying unknown age increments.

      \begin{figure}[th!]
      \centering
       \includegraphics[width=0.5\textwidth]{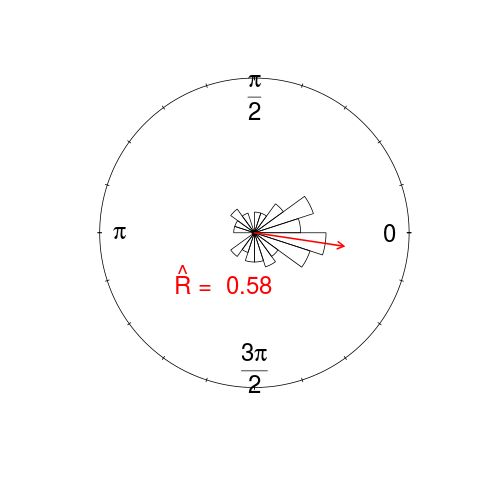}
       
       \includegraphics[width=0.495\textwidth]{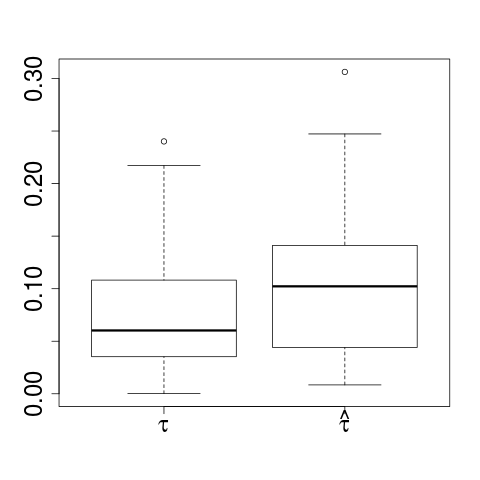}
       \includegraphics[width=0.495\textwidth]{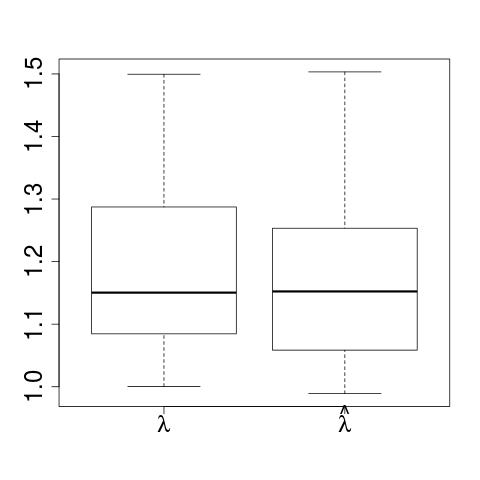}

              \caption{\it Distribution of $2\hat \gamma$ with extrinsic mean (direction of red arrow) and resultant length (top row) after simulating growth along the proximal-distal axis of varying $\tau$ and $\lambda$ and their estimates $\hat \tau, \hat \lambda$ (boxplots in the bottom row).}\label{fig:simulationsvaryingtaulambda1}
              \end{figure}

              The first simulation mimics very moderate growth, in the median an isotropic growth of $15\,\%$ (median of estimates is also $15\,\%$) with anisotropy of $0.06\,\%$ in the median (median of estimates is $10\,\%$), and a boxplot of the model's distribution with their estimates is depicted in the bottom row of Figure \ref{fig:simulationsvaryingtaulambda1}. The resulting estimates for $2\hat\gamma$ are clearly non-uniform (top row of Figure \ref{fig:simulationsvaryingtaulambda1}) and this is highly significantly detected  by Test \ref{test:Rayleigh}. Test \ref{testfortau} significantly detects growth.
              Because the model's true $\tau$ is rather diffuse, Test \ref{vM-anisotropy:test} significantly detects the distal axis only for an accuracy $\epsilon \geq 0.73$. Since Test \ref{nonparametric-anisotropy:test} is less conservative, it turns out that it significantly detects the distal axis already for alignment accuracy ($\epsilon = 0.15$). 
              While the boxplot of the estimated $\hat\lambda$ is only slightly below the one of the model's values $\lambda$, the corresponding boxplots for $\hat\tau$ and $\tau$ show that the estimates of the model's $\tau$ are still more spread out and too large.

      \begin{figure}[th!]
      \centering
       \includegraphics[width=0.5\textwidth]{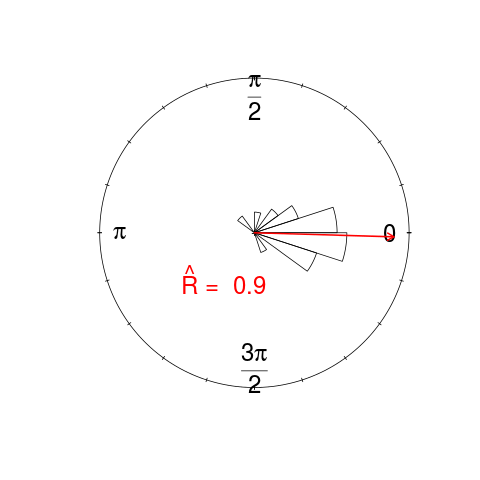}
       
       \includegraphics[width=0.495\textwidth]{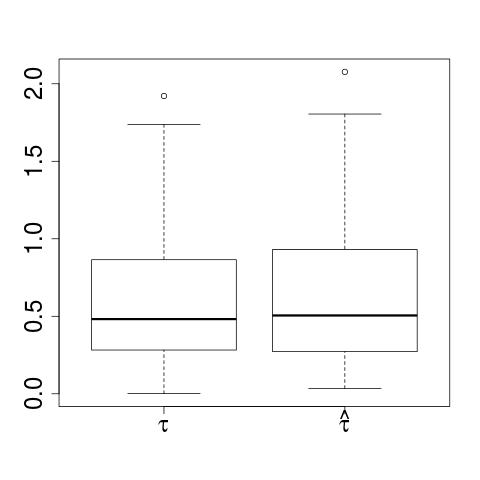}
       \includegraphics[width=0.495\textwidth]{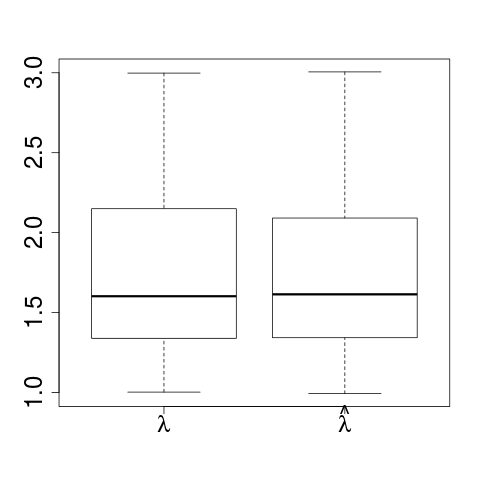}

              \caption{\it Notation as in Figure \ref{fig:simulationsvaryingtaulambda1}, but simulating much stronger growth.}\label{fig:simulationsvaryingtaulambda2}
              \end{figure}

             The second simulation mimics considerable growth, in the median an isotropic growth of $60 \,\%$ (median of estimates is $61\,\%$) with anisotropy of $0.45\,\%$ in the median (median of estimates is $51\,\%$), and a boxplot of the model's distribution with their estimates is depicted in the bottom row of Figure \ref{fig:simulationsvaryingtaulambda2}. The resulting estimates for $2\hat\gamma$ are even more clearly non-uniform (top row of Figure \ref{fig:simulationsvaryingtaulambda2}) and, again, this is highly significantly detected  by Test   \ref{test:Rayleigh}. Also, Test \ref{testfortau}  highly significantly detects growth.                  
              Because the model's $\tau$ are again rather diffuse, Test \ref{vM-anisotropy:test} significantly detects the distal axis only for roughly doubled alignment accuracy $\epsilon \geq 0.33$. Again Test \ref{nonparametric-anisotropy:test} is less conservative and it significantly detects the distal axis at alignment accuracy. 
              Now, both boxplots of the estimated $\hat \tau$ and $\hat\lambda$, respectively, are very similar to the ones of the model's values $\lambda$ and $\tau$, respectively. 

              \FloatBarrier

 \section{Outlook}


In a next step, beyond the scope of this paper, our methods can be applied to fingerprint growth data, (why such data are particularly difficult to obtain is highlighted in the recent study by the \cite{JRC2013}), to develop realistic fingerprint growth models that can be used in long-term health programs involving newborns, toddlers and children. Moreover, they can also be used against forgery of birth certificates, which are the weakest link in the identity chain, by including fingerprints which could then be projected into adulthood. Remarkably, according to an article in Le Parisien (19.12.2011, \cite{LeParisien2011}) between 0.5 and 1 million of a total of 6.5 million passports which are currently used in France are estimated to have been issued on the basis of forged breeder documents by the applicants.

\section*{Acknowledgements}

C. Gottschlich and S. Huckemann gratefully acknowledge the support of the Felix-Bernstein-Institute for Mathematical Statistics in the Biosciences, the Niedersachsen Vorab of the Volkswagen Foundation. All authors are indebted to support by the DFG RTG 2088.

\bibliography{FingerGrowth}
\bibliographystyle{Chicago}

\end{document}